\title{Quantum Algorithms and Oracles with the Scalable ZX-calculus}
\author{Titouan Carette \quad Yohann D'Anello \quad Simon Perdrix\\Université de Lorraine, CNRS, Inria, LORIA\\ F 54000 Nancy, France}
\newcommand{\quot}[2]{{\raisebox{.2em}{$#1$}\left/\raisebox{-.2em}{$#2$}\right.}}
\newcommand{\interp}[1]{\left\llbracket#1\right\rrbracket}
\newcommand{\bvdots}{ \tikz[baseline, every node/.style={inner sep=0}]{ \node at (0,0){.}; \node at (0,-6pt){.}; \node at (0,6pt){.}; } }
\newcommand{\ket}[1]{|#1\rangle}
\newcommand{\bra}[1]{\langle#1|}
\newcommand{\df}{\stackrel{\scriptscriptstyle def}{=}}
\let\oldtikzfig\tikzfig
\renewcommand{\tikzfig}[1]{
	\tikzsetnextfilename{#1}
	\oldtikzfig{#1}
}
\let\oldctikzfig\ctikzfig
\renewcommand{\ctikzfig}[1]{
	\tikzsetnextfilename{#1}
	\oldctikzfig{#1}
}
\definecolor{zx_grey}{RGB}{211,211,211}
\tikzstyle{gn}=[fill=green, draw=black, shape=circle, tikzit category=ZX, tikzit fill=green, tikzit draw=black, tikzit shape=circle, inner sep=0.1em]
\tikzstyle{rn}=[fill=red, draw=black, shape=circle, tikzit fill=red, tikzit draw=black, tikzit category=ZX, tikzit shape=circle, inner sep=0.1em]
\tikzstyle{divide}=[regular polygon, regular polygon sides=3, shape border rotate=90, draw=black, fill={zx_grey}, inner sep=1.5pt, tikzit category=scal, rounded corners=0.8mm]
\tikzstyle{black}=[fill=black, draw=black, shape=circle, tikzit fill=black, tikzit draw=black, tikzit shape=circle, tikzit category=IH, inner sep=2pt]
\tikzstyle{gather}=[fill={zx_grey}, draw=black, tikzit category=scal, rounded corners=0.8mm, regular polygon, regular polygon sides=3, shape border rotate=-90, inner sep=1.5pt]
\tikzstyle{ggen}=[fill=white, draw=black, shape=rectangle, rounded corners=2mm, line width=1pt, tikzit draw=red, tikzit category=scal]
\tikzstyle{white}=[fill=white, draw=black, shape=circle, inner sep=2pt, tikzit category=IH]
\tikzstyle{mbox}=[fill=white, draw=black, rounded rectangle, rounded rectangle west arc=none, tikzit category=scal, tikzit shape=rectangle]
\tikzstyle{A}=[fill=white, shape=circle, tikzit category=scal, inner sep=1pt]
\tikzstyle{ggreen}=[fill=green, draw=black, shape=circle, tikzit category=SZX, tikzit fill=green, tikzit draw=black, line width=1pt, inner sep=0.1em]
\tikzstyle{gred}=[fill=red, draw=black, shape=circle, rounded corners=2mm, tikzit category=SZX, inner sep=0.1em, tikzit fill=red, line width=1pt]
\tikzstyle{ghad}=[minimum size=3mm, font={\scriptsize\boldmath}, shape=rectangle, inner sep=1mm, line width=1pt, outer sep=-1.5mm, scale=0.8, tikzit shape=rectangle, draw=black, fill=yellow, tikzit draw=blue]
\tikzstyle{boxm}=[fill=white, draw=black, rounded rectangle, tikzit category=scal, tikzit shape=rectangle, rounded rectangle east arc=none]
\tikzstyle{box}=[fill=white, draw=black, shape=rectangle]
\tikzstyle{had}=[fill=yellow, draw=black, shape=rectangle, tikzit category=ZX, tikzit fill=yellow, tikzit draw=black, inner sep=2.5pt]
\tikzstyle{gwhite}=[fill=white, draw=black, shape=circle, tikzit fill=white, tikzit shape=circle, line width=1 pt, inner sep=2 pt, tikzit draw=red]
\tikzstyle{gblack}=[fill=black, draw=black, shape=circle, tikzit fill=black, tikzit shape=circle, line width=1 pt, inner sep=2 pt, tikzit draw=red]
\tikzstyle{antipode}=[fill=red, draw=black, shape=rectangle, tikzit fill=red, tikzit draw=black, tikzit shape=rectangle, inner sep=2pt]
\tikzstyle{diamond}=[fill=white, draw=black, shape=diamond, inner sep=2pt]
\tikzstyle{mongr}=[fill=green, draw=green, shape=circle, inner sep=2pt]
\tikzstyle{monbl}=[fill=blue, draw=blue, shape=circle, inner sep=2pt]
\tikzstyle{bg}=[inner sep=0.7mm, minimum width=0pt, minimum height=0pt, fill=green, draw=white, very thick, shape=circle]
\tikzstyle{br}=[inner sep=0.7mm, minimum width=0pt, minimum height=0pt, fill=red, draw=white, very thick, shape=circle]
\tikzstyle{rmat}=[draw, signal, fill=red, signal to=east, signal from=west, inner sep=1pt, minimum height=6pt]
\tikzstyle{lmat}=[draw, signal, fill=red, signal to=west, signal from=east, inner sep=1pt, minimum height=6pt]
\tikzstyle{umat}=[draw, signal, fill=red, signal to=north, signal from=south, inner sep=1pt, minimum width=6pt]
\tikzstyle{dmat}=[draw, signal, fill=red, signal to=south, signal from=north, inner sep=1pt, minimum width=6pt]
\tikzstyle{box}=[shape=rectangle, text height=1.5ex, text depth=0.25ex, yshift=0.5mm, fill=white, draw=black, minimum height=5mm, yshift=-0.5mm, minimum width=5mm, font={\small}]
\tikzstyle{Z dot}=[inner sep=0mm, minimum size=2mm, shape=circle, draw=black, fill={rgb,255: red,160; green,255; blue,160}]
\tikzstyle{gdot}=[minimum size=3mm, font={\scriptsize\boldmath}, shape=rectangle, rounded corners=1.3mm, inner sep=1mm, outer sep=-1.8mm, scale=0.8, tikzit shape=circle, draw=black, fill=green, tikzit draw=blue]
\tikzstyle{X dot}=[Z dot, shape=circle, draw=black, fill={rgb,255: red,220; green,0; blue,0}]
\tikzstyle{rdot}=[minimum size=3mm, font={\scriptsize\boldmath}, shape=rectangle, rounded corners=1.3mm, inner sep=1mm, outer sep=-1.8mm, scale=0.8, tikzit shape=circle, draw=black, fill=red, tikzit draw=blue]
\tikzstyle{grdot}=[minimum size=3mm, font={\scriptsize\boldmath}, shape=rectangle, rounded corners=1.3mm, inner sep=1mm, line width=1pt, outer sep=-1.5mm, scale=0.8, tikzit shape=circle, draw=black, fill=red, tikzit draw=blue]
\tikzstyle{ggdot}=[minimum size=3mm, font={\scriptsize\boldmath}, shape=rectangle, line width=1pt, rounded corners=1.3mm, inner sep=1mm, outer sep=-1.5mm, scale=0.8, tikzit shape=circle, draw=black, fill=green, tikzit draw=blue]
\tikzstyle{arrow}=[-->]
\tikzstyle{rfarr}=[draw, signal, fill=black, signal to=east, signal from=west, inner sep=1pt, minimum height=6pt]
\tikzstyle{lfarr}=[draw, signal, fill=black, signal to=west, signal from=east, inner sep=1pt, minimum height=6pt]
\tikzstyle{ufarr}=[draw, signal, fill=black, signal to=north, signal from=south, inner sep=1pt, minimum width=6pt]
\tikzstyle{dfarr}=[draw, signal, fill=black, signal to=south, signal from=north, inner sep=1pt, minimum width=6pt]
\tikzstyle{ry}=[draw, signal, fill=yellow, signal to=east, signal from=west, inner sep=1pt, minimum height=6pt]
\tikzstyle{ly}=[draw, signal, fill=yellow, signal to=west, signal from=east, inner sep=1pt, minimum height=6pt]
\tikzstyle{uy}=[draw, signal, fill=yellow, signal to=north, signal from=south, inner sep=1pt, minimum width=6pt]
\tikzstyle{dy}=[draw, signal, fill=yellow, signal to=south, signal from=north, inner sep=1pt, minimum width=6pt]
\tikzstyle{arrow}=[->]
\tikzstyle{very thick}=[-, line width=1pt, tikzit draw=red]
\tikzstyle{reprise}=[-, line width=2pt, tikzit draw=blue]
\tikzstyle{pointille}=[dashed, -]
\tikzstyle{red}=[-, draw=red]
\tikzstyle{blue}=[-, draw=blue]
\tikzstyle{green}=[-, draw=green]
\tikzstyle{strike}=[-, tikzit draw={rgb,255: red,191; green,0; blue,64}, strike through]
\tikzstyle{strike'}=[-, tikzit draw=cyan, strike bend]
\tikzstyle{dashed arrow}=[->, tikzit draw=green, draw=black, dashed]
\newtheorem{lemma}{lemma}
\newcommand{\ground}{%
	\begin{tikzpicture}[circuit ee IEC,yscale=1.0,xscale=1.0]
	\draw[solid,arrows=-] (0,1ex) to (0,0) node[anchor=center,ground,rotate=-90,xshift=.66ex] {};
	\end{tikzpicture}
}
\begin{document}
\maketitle

\begin{abstract}
The ZX-calculus was introduced as a graphical language able to represent specific quantum primitives in an intuitive way. The recent completeness results have shown the theoretical possibility of a purely graphical description of quantum processes. However, in practice, such approaches are limited by the intrinsic low level nature of ZX calculus. The scalable notations have been proposed as an attempt to recover an higher level point of view while maintaining the topological rewriting rules of a graphical language. We demonstrate that the scalable ZX-calculus provides a formal, intuitive, and compact framework to describe and prove quantum algorithms. As a proof of concept, we consider the standard oracle-based quantum algorithms: Deutsch-Jozsa, Bernstein-Vazirani, Simon, and Grover algorithms, and we show they can be described and proved graphically.

\end{abstract}

\section*{Introduction}

The ZX-calculus is a graphical language for quantum computing \cite{coecke2011interacting}. This is a formal but also intuitive  language which captures fundamental properties of quantum mechanics. Contrary to the quantum circuit formalism, the ZX-calculus is equipped with a complete equational theory \cite{jeandel2018complete,HNW,jeandel2018diagrammatic}. This theoretical result makes the ZX-calculus a ideal tool for multiple applications in quantum computing. Among others we can cite the optimisation of quantum circuits \cite{kissinger2019reducing,duncan2019graph,de2020fast} and the design of fault tolerant quantum computation \cite{de2017zx,gidney2018efficient,hanks2020effective,Pauli-Fusion}. 
In this paper, we want to investigate one of the initial motivations of the ZX-calculus: make the ZX-calculus is a companion language for the description and the proof of quantum algorithms. We consider the standard oracle-based quantum algorithms: Deutsch-Jozsa \cite{deutsch1992rapid}, Bernstein-Vazirani \cite{bernstein1997quantum}, Simon \cite{simon1997power}, and Grover \cite{Gro97b} algorithms and  show they can be formulated and proved graphically. Our approach relies on the diagrammatic description of the quantum oracles and, in particular, a  graphical axiomatisation of the various promises the oracles satisfy.

A  section of the  "Dodo book" \cite{picturing-qp} is dedicated to the description of quantum algorithms in ZX-calculus (in particular Deutsch-Jozsa and Grover), and a few articles \cite{zeng2014abstract,vicary2013topological,gogioso2017fully} address the diagrammatic description of quantum oracles, but we show that the recent developments in the formalism, mainly the scalable construction \cite{carette2019szx,carette2020colored} and the discard construction \cite{carette2019completeness} allow for more self-contained, accurate and compact ZX-based proofs of quantum algorithms. Notice that depending on the algorithm we also use generators of the ZH-calculus \cite{backens2018zh} a variant of the ZX-calculus. 

%
%
%
%
%
%
%
%
%
\section{The ZX-calculus}

In this section we introduce various graphical tools. We use the ZX-calculus of \cite{coecke2011interacting} alongside the H harvestman of \cite{backens2018zh}. We work with the extension to mixed state quantum mechanics of \cite{carette2019completeness} and the scalable notations of \cite{carette2019szx}. Thus, the language could be denoted $S\text{ZXH}^{\ground}$-calculus, for simplicity we will rather say ZX-calculus. 

\subsection{String diagrams}

Our diagrams are arrows in a $C$-colored prop, i.e., a small strict symmetric monoidal category whose monoid of objects is spanned by a set $C$. We represent maps as colored string diagrams.  A map $f:a\to b$ is depicted as a box with input and output wires: $\tikzfig{f}$. By convention, we do not write the colors of the wires if this is clear from context. We do not draw wires of tensor unit type. We draw our diagrams from left to right. The composition $g\circ f: a\to c$ of two diagrams $f:a\to b$ and $g:b\to c$ is depicted by plugging diagrams:

\begin{center}
	 $\tikzfig{cfg}\df\tikzfig{g}\circ\tikzfig{f}$.
\end{center}

The tensor $f\otimes g: a\otimes c\to b\otimes d$ of two diagrams $f:a\to b$ and $g:c\to d$ is depicted by juxtaposition of diagrams:

\begin{center}
	 $\tikzfig{tfg}\df\tikzfig{f}\otimes\tikzfig{g}$.
\end{center}

For any color $c\in C$, the identity $id_c:c\to c$ is depicted: $\tikzfig{n}$. The swap map $\sigma_{c,c'}:c\otimes c'\to c'\otimes c$ is depicted: $\tikzfig{swap}$. $\sigma_{n,m}$ is a natural involution meaning:
\begin{center}
	$\tikzfig{i0}=\tikzfig{i1}$ and for any diagram $D$ : $\tikzfig{nat0}=\tikzfig{nat1}$.
\end{center}

Our language also has a symmetric compact structure. For every color $c\in C$, there are two maps, cup map $\tikzfig{cup}: I \to c \otimes c$ and the cap map $\tikzfig{cap}: c\otimes c \to I$, satisfying:
\begin{center}
	$\tikzfig{cs0}=\tikzfig{cs1}\qquad\tikzfig{s0}=\tikzfig{s2}=\tikzfig{s1}\qquad\tikzfig{cs3}=\tikzfig{cs2}$.
\end{center}

\subsection{Scalable notations}

Given a monochromatic prop $\mathcal{D}$ we define a colored prop $S\mathcal{D}$. $\mathcal{D}$ being monochromatic we denote $1$ the only color. All object are tensor products of copies of $1$, the tensor of $n$ copies is denoted $n$ and the tensor unit is denoted $0$. $S\mathcal{D}$ is a $\mathbb{N}^*$-colored prop, where $\mathbb{N}^*$ are the natural integers without zero. The colors are denoted $[n]$, the tensor unit is denoted $[0]$. The \textbf{size} of a type is inductively defined as: $|[n]|\df n$ and $|a\otimes b|\df|a|+ |b|$. We use thin wires to denote wires of type $[1]$ and thick wires denote any $[n]$. We write $[m]^n$ for the tensor product of $n$ wires $[m]$ with the convention $[m]^0=[0]$. For every generator $g:n\to m$ of $\mathcal{D}$ and every $k\in \mathbb{N}^*$ there is a generator $S_k(g):[k]^n \to [k]^m$ in $S\mathcal{D}$. Those scaled generators satisfy the same equations as the original ones. This induces a family of strict symmetric monoidal functors: $S_k:\mathcal{D}\to S\mathcal{D}$ defined on objects as $S_k(1)\df [k]$ and on morphisms as $S_k(g)\df S_k(g)$. $S\mathcal{D}$ contains also two fundamental families of generators, the dividers and the gatherers:

\begin{center}
	$\tikzfig{dd}: [n+1]\to [1]\otimes [n]\qquad\qquad\tikzfig{gg}: [1]\otimes [n]\to [n+1]$
\end{center}

satisfying:

\begin{center}
	$\tikzfig{ex0}=\tikzfig{ex1}\qquad\qquad\tikzfig{el0}=\tikzfig{el1}$.
\end{center}

From those equations we can deduce a coherence like result, the \textbf{rewiring lemma} proved in \cite{carette2020colored}:

\begin{lemma}[Rewiring]
	Given to types $a$ and $b$ such that $|a|=|b|$, there is a unique isomorphism $\gamma_{a,b}:a\to b$ made of dividers and gatherers.
\end{lemma}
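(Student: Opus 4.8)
The plan is to prove existence and uniqueness separately, reducing everything to a single canonical ``fully split'' isomorphism. Write $d_n:[n+1]\to[1]\otimes[n]$ for the divider and $g_n:[1]\otimes[n]\to[n+1]$ for the gatherer, and recall that the equations of the excerpt give the two inverse (yanking) laws $g_n\circ d_n=id_{[n+1]}$ and $d_n\circ g_n=id_{[1]\otimes[n]}$. For a type $a$ I first define the \emph{full divider} $\delta_a:a\to[1]^{|a|}$, built only from dividers: on a single colour I set $\delta_{[1]}\df id_{[1]}$ and $\delta_{[n+1]}\df(id_{[1]}\otimes\delta_{[n]})\circ d_n$, peeling off one wire at a time, and on a tensor I set $\delta_{a\otimes b}\df\delta_a\otimes\delta_b$; this is unambiguous because the tensor is strict. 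A straightforward induction on $n$, using the inverse laws, shows that each $\delta_{[n]}$ is an isomorphism whose inverse is the matching composite of gatherers, so every $\delta_a$ is invertible. Since $|a|=|b|$ forces $[1]^{|a|}=[1]^{|b|}$, the composite $\gamma_{a,b}\df\delta_b^{-1}\circ\delta_a:a\to b$ is then an isomorphism made entirely of dividers and gatherers, which settles existence.

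For uniqueness I will show that \emph{every} isomorphism $\phi:a\to b$ made of dividers and gatherers satisfies $\delta_b\circ\phi=\delta_a$, which immediately forces $\phi=\delta_b^{-1}\circ\delta_a=\gamma_{a,b}$. I read ``made of dividers and gatherers'' as excluding the symmetry, since otherwise conjugating a swap on $[1]^{|a|}$ by $\delta_a$ would already produce a second automorphism of $a$ and break uniqueness. Under this reading, using only the interchange law $f\otimes g=(f\otimes id)\circ(id\otimes g)$, any such $\phi$ can be serialised as $\phi=\phi_m\circ\cdots\circ\phi_1$ where each $\phi_i=id_x\otimes h\otimes id_y$ is a single divider or gatherer $h$ padded by identities. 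I then prove $\delta_b\circ\phi=\delta_a$ by induction on $m$, the inductive step being the single-step lemma: for each $\phi_i:c_{i-1}\to c_i$ one has $\delta_{c_i}\circ\phi_i=\delta_{c_{i-1}}$.

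The single-step lemma reduces, by functoriality of the tensor and the identity $\delta_{x\otimes z\otimes y}=\delta_x\otimes\delta_z\otimes\delta_y$, to the two unpadded statements $\delta_{[1]\otimes[n]}\circ d_n=\delta_{[n+1]}$ and $\delta_{[n+1]}\circ g_n=\delta_{[1]\otimes[n]}$. The first is exactly the recursive definition of $\delta_{[n+1]}$. The second is where the calculus is genuinely used: expanding $\delta_{[n+1]}=(id_{[1]}\otimes\delta_{[n]})\circ d_n$ and applying the inverse law $d_n\circ g_n=id_{[1]\otimes[n]}$ cancels the divider--gatherer pair and leaves $id_{[1]}\otimes\delta_{[n]}=\delta_{[1]\otimes[n]}$.

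I expect the main obstacle to be the uniqueness half, and within it two points need care. The first is the serialisation of an arbitrary divider--gatherer diagram into identity-padded single generators, together with the observation that the symmetry is never required; this is routine symmetric-monoidal bookkeeping but must be stated precisely for the induction on $m$ to be legitimate. The second, and conceptually central, point is the gatherer case of the single-step lemma, where the yanking equation is invoked to collapse a divider against a gatherer and thereby lower the wire count --- everything else is just definitional unwinding of $\delta$. Combining existence with the identity $\delta_b\circ\phi=\delta_a$ then yields both the existence and the uniqueness of $\gamma_{a,b}$.
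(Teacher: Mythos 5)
Your proof is correct. Note that the paper itself does not prove this lemma --- it is imported from the cited reference [carette2020colored] --- so there is no in-paper argument to compare against; but your strategy (a canonical ``full divider'' $\delta_a:a\to[1]^{|a|}$, existence via $\gamma_{a,b}=\delta_b^{-1}\circ\delta_a$, uniqueness by showing $\delta_b\circ\phi=\delta_a$ for any divider--gatherer morphism after serialising it into identity-padded single generators) is exactly the normal-form argument that underlies the cited work and the paper's subsequent Normal Form lemma. The two points you flag as delicate are indeed the right ones, and you resolve both correctly: the symmetry must be excluded from ``made of dividers and gatherers'' or uniqueness fails outright, and the only non-definitional step is cancelling $d_n\circ g_n$ in the gatherer case of the single-step lemma.
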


In other words, any well typed equation involving only dividers and gatherers holds. The \textbf{boxing functor} $[\_]:S\mathcal{D}\to S\mathcal{D}$ is defined as $[a]\df [|a|]$ on objects and as $[f]\df \gamma_{b,[b]} \circ f\circ\gamma_{[a],a}$ on morphisms $f:a\to b$. Boxing is not a strict but a strong monoidal functor, in fact we have:

\begin{center}
	$\tikzfig{tbox0}=\tikzfig{tbox1}$
\end{center}

We define inductively the \textbf{thickening endofunctor}: $T_k: S\mathcal{D}\to S\mathcal{D}$ as $T_k([n])\df [kn]$,  $T_1\df id_{S\mathcal{D}}$ and 

\begin{center}
	$\tikzfig{thick0}\df\tikzfig{thick1}$
\end{center}

Finally, for each generator $g$ in $\mathcal{D}$ we add the equations: $T_k(S_1(g))=S_k(g)$. Thus $T_l \circ S_k= S_{kl}$, this implies graphically:

\begin{center}
	$\tikzfig{dist0}=\tikzfig{dist1}$
\end{center}

It is shown in \cite{carette2020colored} that any diagram in $S\mathcal{D}$ can be rewritten into normal form.

\begin{lemma}[Normal form]
	For each $f:a\to b$ in $S\mathcal{D}$ there is a unique $|f|:|a|\to |b|$ in $\mathcal{D}$ such that $f= \gamma_{[1]^{|b|},b} \circ S_1(|f|) \circ \gamma_{[1]^{|a|},a} $
\end{lemma}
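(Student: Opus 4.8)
The plan is to prove existence and uniqueness separately: existence by structural induction on the construction of $f$ in $S\mathcal{D}$, and uniqueness by exhibiting a retraction functor that recovers $|f|$ from $f$. For existence I would first check the claim on generators. A divider or a gatherer is a size-preserving morphism built only from dividers and gatherers, so by the Rewiring lemma it equals the canonical isomorphism between its (equal-size) source and target; thus it is already in the stated form, with underlying map an identity of $\mathcal{D}$. For a scaled generator $S_k(g)$ with $g:n\to m$, I would use the identity $T_k\circ S_1=S_k$ together with the distributivity equation to split every thick input wire $[k]$ into $k$ thin wires, push the generator onto the resulting thin diagram, and re-gather the outputs; reading off the thin middle exhibits $S_k(g)$ as $\gamma\circ S_1(g^{\otimes k})\circ\gamma'$ up to the permutation of thin wires supplied by the Rewiring lemma, so $|S_k(g)|$ is the suitably reindexed power $g^{\otimes k}$. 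It then remains to show stability under the two prop operations. For a composition $g\circ f$ with $f:a\to b$ and $g:b\to c$, concatenating the two normal forms produces a pair of rewirings between $b$ and $[1]^{|b|}$ in the middle; these are mutually inverse by the Rewiring lemma, collapse to the identity, and functoriality of $S_1$ yields $|g\circ f|=|g|\circ|f|$. For the tensor product I would similarly interleave the thin wires of the two factors using the canonical rewiring, obtaining $|f\otimes g|$ as the corresponding reindexing of $|f|\otimes|g|$.

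For uniqueness I would reduce to the faithfulness of $S_1:\mathcal{D}\to S\mathcal{D}$. Since the inner rewirings $\gamma_{[1]^{|a|},a}$ and $\gamma_{[1]^{|b|},b}$ are, by the Rewiring lemma, the \emph{unique} divider–gatherer isomorphisms determined by $a$ and $b$, any two normal forms of the same $f$ share the same outer factors and differ only in their thin middles; cancelling the invertible rewirings reduces an equality of normal forms to $S_1(h_1)=S_1(h_2)$, whence $h_1=h_2$ once $S_1$ is known to be faithful. To establish faithfulness I would build a symmetric monoidal functor $N:S\mathcal{D}\to\mathcal{D}$ acting on objects by $N(a)=|a|$, sending every divider and gatherer to an identity, and sending $S_k(g)$ to the reindexed power $g^{\otimes k}$ identified in the existence step. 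By construction $N\circ S_1=\mathrm{id}_{\mathcal{D}}$ and $N$ maps every $\gamma_{a,b}$ to an identity, so applying $N$ to the normal form gives $N(f)=|f|$; hence $|f|$ is uniquely determined, namely as $N(f)$.

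The main obstacle is the well-definedness of $N$: I must verify that this assignment respects all the defining equations of $S\mathcal{D}$, in particular the scaled-generator relations inherited from $\mathcal{D}$, the divider–gatherer relations, and the thickening relations $T_k(S_1(g))=S_k(g)$ that pin down how the reindexing permutations compose. The Rewiring lemma is the lever here, since it forces every admissible reshuffling of dividers and gatherers to agree, so that the permutations accumulated by $N$ are coherent and collapse exactly as required. Checking the $T_k$-relations, where these permutations interact nontrivially with tensor powers, is the delicate point, and is where I expect the bulk of the bookkeeping to lie.
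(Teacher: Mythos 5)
The paper does not actually prove this lemma: it is imported wholesale from \cite{carette2020colored} (``It is shown in \dots that any diagram in $S\mathcal{D}$ can be rewritten into normal form''), and the appendix contains no argument for it, so there is no in-paper proof to compare against line by line. On its own merits your outline is sound and follows the standard route for this kind of normal-form statement: existence by structural induction, pushing dividers and gatherers to the boundary (with the Rewiring lemma absorbing all the coherence of these rewirings), and uniqueness via a retraction $N:S\mathcal{D}\to\mathcal{D}$ with $N\circ S_1=\mathrm{id}$ that kills dividers and gatherers --- note that your $N$ is exactly the wire-stripping functor $|\_|$ that the paper introduces immediately after the lemma, so you are in effect reconstructing the intended mechanism. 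Your identification of $|S_k(g)|$ as $g^{\otimes k}$ conjugated by the bundle-to-layer interleaving permutations is correct, and you rightly flag that the genuine content of the proof is the well-definedness of $N$ on the quotient, i.e.\ checking that it respects the relations of $S\mathcal{D}$, especially $T_k(S_1(g))=S_k(g)$ where the interleavings must be shown to compose coherently; that verification is where \cite{carette2020colored} spends its effort, and your outline defers it but does not misjudge it. Two small remarks: in the tensor case no reindexing is needed at all, since $[1]^{|a|}\otimes[1]^{|c|}=[1]^{|a\otimes c|}$ on the nose and the tensor of two divider--gatherer isomorphisms is again one, so $|f\otimes g|=|f|\otimes|g|$ strictly; and for completeness the induction should also cover the structural generators of $S\mathcal{D}$ on thick wires (identities, symmetries, and cups/caps of $[n]$), which strip to the corresponding block permutations and nested cups/caps in $\mathcal{D}$ --- routine, but part of the case analysis.
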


This gives the \textbf{wire stripping} functor $|\_|:S\mathcal{D}\to \mathcal{D}$. Note that $f\mapsto |S_k(f)|$ is the multiplexing functor of \cite{chantawibul2018monoidal}. If $\mathcal{D}$ has a compact structure then so does $S\mathcal{D}$. In this situation, an iteration mechanism is available.

\begin{lemma}[Iteration]
Given any diagram $f:a\to a$ in $S\mathcal{D}$:

\begin{center}
	$\tikzfig{iter0}=\left(\tikzfig{iter1}\right)^{k+1}$
\end{center}
	
\end{lemma}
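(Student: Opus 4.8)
The plan is to prove the identity by induction on $k$, reading the left-hand side as the gadget $G_{k+1}(f)$ obtained by thickening $f$ to $T_{k+1}(f)$ and then closing the thick input and output wires with the compact structure, so that the $k+1$ copies of $f$ packed inside $T_{k+1}(f)$ become wired in series. For the base case $k=0$ we have $T_1=id_{S\mathcal{D}}$ by definition, so no connecting cap or cup is introduced and $G_1(f)=f=\left(f\right)^{1}$, which is the claimed equality.

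For the inductive step I would first unfold the thickening one layer at a time. The inductive definition of $T_{k+1}$ expresses $T_{k+1}(f)$, up to a composite of dividers and gatherers, as the parallel map $f\otimes T_k(f)$: a $[(k+1)n]$-wire splits into a $[n]$-wire carrying one copy of $f$ and a $[kn]$-wire carrying $T_k(f)$. I would then push the connecting cap and cup of the gadget through these dividers and gatherers. This is the step that requires care, because the cap and cup live in the induced compact structure on $S\mathcal{D}$ while the splitting is expressed with dividers and gatherers; what is needed is the coherence statement that the cap (resp. cup) on $[(k+1)n]$ decomposes, up to rewiring, as the cap (resp. cup) on $[n]$ tensored with the one on $[kn]$. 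I expect this compatibility between the compact structure and the scalable structure to be the main obstacle: it is \emph{not} an instance of the Rewiring lemma, which governs only dividers and gatherers, and it must instead be derived from the explicit definition of the compact structure on $S\mathcal{D}$ together with the distribution law $T_l\circ S_k=S_{kl}$ already recorded above.

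Once this decomposition is in place, the gadget $G_{k+1}(f)$ separates into one copy of $f$, the gadget $G_k(f)$ built on $T_k(f)$, and a single remaining cap linking the free output of the $f$-layer to the free input of $G_k(f)$. Applying the snake (yanking) equations of the compact structure to this cap turns the parallel combination into the sequential composite $G_k(f)\circ f$; the induction hypothesis $G_k(f)=f^{k}$ then yields $G_{k+1}(f)=f^{k}\circ f=f^{k+1}$. An alternative, non-inductive route is to put $f$ in its normal form $f=\gamma\circ S_1(|f|)\circ\gamma'$, use that $S_{k+1}(|f|)$ distributes over dividers into $k+1$ thin copies of $|f|$, and observe that the thick cap and cup divide into thin caps and cups joining consecutive copies; repeated yanking on the thin wires composes the copies into $S_1(|f|^{k+1})$, and rewiring the ends recovers $f^{k+1}$. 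Both routes reduce to the same coherence fact, so the genuine crux is the interaction of the compact structure with dividing and gathering.
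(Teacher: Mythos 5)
Your proof is correct and follows essentially the same route as the paper's: induction on $k$, with the base case $k=0$ collapsing the gadget to $f$ and the inductive step peeling one layer off the thickening, splitting the thick cap and cup along the divider, and yanking to turn the detached parallel copy into a sequential composition before applying the induction hypothesis. The compatibility of the cap and cup with dividers and gatherers that you flag as the crux is exactly what the paper's intermediate diagrammatic steps rely on, and it is available from the inductive definition of $T_k$ together with $T_l\circ S_k=S_{kl}$, so there is no gap.
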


This allows to represent for loops graphically. In practice we need to compute the thickening of $[f]$. Thickening a scaled generator is by definition very easy, this only increases the size of every wires and generators. However, thickening a divider or a gatherer involves a permutation of the wires. This allows to apply the iteration mechanism as soon as we have a good representation of permutations.

\subsection{Arachnids}

Various graphical calculi have been introduced to represent quantum mechanics over qubits. The ZX-calculus was first introduced in \cite{coecke2011interacting}. Then the ZW-calculus in \cite{hadzihasanovic2018zw} and latter the ZH-calculus \cite{backens2018zh}. Those languages have their own specificities but are similar enough to be combined when needed. In this paper we will use a mix of ZX-calculus and ZH-calculus. The green and red families of spiders are indexed by phase vectors in $\left(\quot{\mathbb{R}}{2\pi\mathbb{Z}}\right)^k$, by convention the phase is $0$ if not given. The yellow family of harvestmen (a variation of spiders obeying a modified fusion rule) is indexed by complex vectors, by convention the phase is $-1$ if not given. They are respectively depicted:

\begin{center}
	$\tikzfig{z}:\cdot[k]^n\to [k]^m\qquad\tikzfig{x}:[k]^n\to [k]^m\qquad\tikzfig{h}:[k]^n\to [k]^m$
\end{center}

Denoting $\alpha_1$ {(resp. $x_1$)} the head of the phase vector $\alpha$ (resp. $x$) and $\alpha'$ (resp. $x'$) its tail, the arachnids interact with dividers and gatherers with:

\begin{center}
	$\tikzfig{dig0}=\tikzfig{dig1}\qquad\tikzfig{dir0}=\tikzfig{dir1}\qquad\tikzfig{dih0}=\tikzfig{dih1}$
\end{center}

All of them are flexsymmetric that is for every arachnid $g:a\to b$ and every permutation of wires $\sigma$:

\begin{center}
	$\tikzfig{fl0}=\tikzfig{fl1}$.
\end{center}  

Furthermore, we can bend the legs of the arachnids: 

\begin{center}
	$\tikzfig{dg0}=\tikzfig{dg1}\qquad\tikzfig{dr0}=\tikzfig{dr1}\qquad\tikzfig{dh0}=\tikzfig{dh1}$
\end{center}

They all satisfy fusion rules:

\begin{center}
	$\tikzfig{gar0}=\tikzfig{gar1}\qquad\tikzfig{rar0}=\tikzfig{rar1}\qquad\tikzfig{hha0}=\tikzfig{hha1}$
\end{center}

Note we only define fusion of yellow boxes indexed by phase $-1$.

The three arachnids interact with each other in the following way:

\begin{center}
	$\tikzfig{rgb0}=\tikzfig{rgb1}\qquad\tikzfig{hgb0}=\tikzfig{hgb1}\qquad\tikzfig{hgr0}=\tikzfig{hgr1}$
\end{center}

\subsection{Quantum gates}

Quantum processes are maps in the category $\textbf{CPM}_2$ whose objects are the sets $\mathcal{M}_{2^n\times2^n}(\mathbb{C})$ of matrices  and arrows are the completely positive maps. $\mathcal{M}_{2\times2}(\mathbb{C})$ corresponds to a qubit and $\mathcal{M}_{2^n\times2^n}(\mathbb{C})$ to a register of $n$ qubits. We denote $\textbf{2}$ the set $\{0,1\}$. $\ket{x}$ with $x\in \textbf{2}^n$ denotes the canonical basis of $\mathbb{C}^{2^n}$. We write $\bra{x}\df \ket{x}^\dagger$ where $\dagger$ is the Hermitian adjoint. To each type we associate an Hilbert space and to each diagram we associate a completely positive map. In other words, there is an \textbf{interpretation functor} $\interp{\_}:S\mathcal{D}\to \textbf{CPM}_2$. We have $\interp{[n]}\df\mathcal{M}_{2^n\times2^n}(\mathbb{C})$ and $\interp{a\otimes b}\df \interp{a}\otimes\interp{b}$. Note that in general $[a+b]\neq [a]\otimes [b]$, but we always have $\interp{[a+b]}\simeq \interp{[a]\otimes [b]}\simeq \mathcal{M}_{2^{a+b}\times2^{a+b}}(\mathbb{C})$. The wires have interpretations.

\begin{center}
	$\interp{\tikzfig{n}}\df \rho \mapsto \rho \quad\interp{\tikzfig{swap}}\df \rho \otimes \rho' \mapsto \rho' \otimes \rho\quad\interp{\tikzfig{cup}}\df \sum\limits_{x\in \textbf{2}^n}\ket{xx}\bra{xx}\quad\interp{\tikzfig{cap}}\df \rho\mapsto \sum\limits_{x\in \textbf{2}^n}\bra{xx}\rho\ket{xx}$.
\end{center}

The dividers and gatherers act trivially:
 
\begin{center}
	$\interp{\tikzfig{dd}}\df \rho\mapsto \rho \qquad\interp{\tikzfig{gg}}\df \rho\mapsto \rho$
\end{center}

We use the well tempered normalization of \cite{de2020well} setting $\interp{\bigstar}\df \frac{1}{\sqrt{2}}$.  
The arachnids of type $[k]^n\to [k]^m$ have interpretations:

\begin{center}
	\begin{tabular}{l}
		$\interp{\tikzfig{z}}\df \rho \mapsto V\rho V^\dagger $ with  $V\df 2^{k\frac{n+m-2}{4}}\sum\limits_{x\in {\bf 2}^k} e^{i(x\cdot a)}\ket{x}^{\otimes n}\!\bra{x}^{\otimes m}$\\
		\\
		$\interp{\tikzfig{x}}\df \rho \mapsto V\rho V^\dagger $ with  $V\df 2^{k\frac{2-n-m}{4}}\sum\limits_{x_i \in {\bf 2}^k} \prod\limits_{j=1}^{k}\frac{1+e^{i \left(a_j + \sum\limits_{i=1}^{n+m}x_{i,j}\right)}}{2}\ket{x_1 \cdots x_n}\!\bra{x_{n+1} \cdots x_{n+m}}$\\
		\\
		$\interp{\tikzfig{h}}\df \rho \mapsto V\rho V^\dagger $ with  $V\df 2^{-k\frac{n+m}{4}}\sum\limits_{x_i\in {\bf 2}^k} \prod\limits_{j=1}^{k} x^{\bigwedge\limits_{i=1}^{n+m} x_{i,j}}\ket{x_1 \cdots x_n}\!\bra{x_{n+1} \cdots x_{n+m}}$
	\end{tabular}
\end{center}
Where the $x_i$ are binary words of size $k$ and $x_{i,j}$ is the $j$-th bit of $x_i$. By convention if the phase is not given it is $0$ for red and green spiders and -$1$ for yellow ones. Usually quantum algorithms are presented as quantum circuits built from elementary quantum gates. Our language is expressive enough to represent all of them. The main idea is that our generators decompose quantum gates into more fundamental parts which equational theory is better understood. The most common states are represented:

\setlength{\tabcolsep}{7pt}
\renewcommand{\arraystretch}{1}

\begin{center}
	\begin{tabular}{|l|c|c|c|c|}
		\hline
		State & $\ket{0}$ & $\ket{1}$ & $\frac{\ket0+\ket 1}{\sqrt 2}$& $\frac{\ket0-\ket 1}{\sqrt 2}$ \\
		\hline
		Diagram &$\tikzfig{k0}$&$\tikzfig{k1}$&$\tikzfig{k2}$&$\tikzfig{k3}$\\
		\hline
		Density Matrix &\footnotesize$\begin{pmatrix}
		1&0\\ 0&0
		\end{pmatrix}$&\footnotesize$\begin{pmatrix}
		0&0\\
		0&1
		\end{pmatrix}$&\footnotesize$\begin{pmatrix}
		\frac{1}{2}&\frac{1}{2}\\
		\frac{1}{2}&\frac{1}{2}
		\end{pmatrix}$&\footnotesize$\begin{pmatrix}
		\frac{1}{2}&-\frac{1}{2}\\
		-\frac{1}{2}&\frac{1}{2}
		\end{pmatrix}$\\
		\hline
	\end{tabular}
\end{center}

We will also use the mirror image of those states corresponding to effects. By doing so we will obtain post-selected circuits and we will be able to compute amplitudes.

The following table provides the representation of the most common quantum gates gates. They are all pure maps, i.e., operators of the form: $\rho \mapsto V\rho V^\dagger$. We just give the corresponding matrix $V$.

\begin{center}
	\begin{tabular}{|c|c|c|c|c|c|c|c|}
		\hline
		Name & H & Not & Z & Swap & C-Not & C-Z & Toffoli\\ 
		\hline
		Gate &$\tikzfig{qhgt}$&$\tikzfig{qngt}$&$\tikzfig{qzgt}$&$\tikzfig{qswapgt}$&$\tikzfig{qcngt}$&$\tikzfig{qczgt}$&$\tikzfig{qtofgt}$\\ 
		\hline
		Diagram &$\tikzfig{hgt}$&$\tikzfig{ngt}$&$\tikzfig{zgt}$&$\tikzfig{swapgt}$&$\tikzfig{cngt}$&$\tikzfig{czgt}$&$\tikzfig{tofgt}$\\
		\hline
		$V$ &\footnotesize$\begin{pmatrix}
		\frac{1}{\sqrt{2}}&\frac{1}{\sqrt{2}}\\
		\frac{1}{\sqrt{2}}&\frac{-1}{\sqrt{2}}
		\end{pmatrix}$&\footnotesize$\begin{pmatrix}
		0&1\\
		1&0
		\end{pmatrix}$&\footnotesize$\begin{pmatrix}
		1&0\\
		0&-1
		\end{pmatrix}$&\scalebox{0.5}{$\begin{pmatrix}
			1&0&0&0\\
			0&0&1&0\\
			0&1&0&0\\
			0&0&0&1
			\end{pmatrix}$}&\scalebox{0.5}{$\begin{pmatrix}
			1&0&0&0\\
			0&1&0&0\\
			0&0&0&1\\
			0&0&1&0
			\end{pmatrix}$}&\scalebox{0.5}{$\begin{pmatrix}
			1&0&0&0\\
			0&1&0&0\\
			0&0&1&0\\
			0&0&0&-1
			\end{pmatrix}$}&\scalebox{0.5}{$\begin{pmatrix}
			1&0&0&0&0&0&0&0\\
			0&1&0&0&0&0&0&0\\
			0&0&1&0&0&0&0&0\\
			0&0&0&1&0&0&0&0\\
			0&0&0&0&1&0&0&0\\
			0&0&0&0&0&1&0&0\\
			0&0&0&0&0&0&0&1\\
			0&0&0&0&0&0&1&0
			\end{pmatrix}$} \\
		\hline
	\end{tabular}
\end{center}

We will also use the \textbf{discard map} and its transpose which is the non normalized \textbf{completly mixed state}:

\begin{center}
	$\interp{\tikzfig{mix}}\df \rho\mapsto Tr(\rho)\qquad\interp{\tikzfig{disc}}\df \rho\mapsto \sum\limits_{x\in \textbf{2}} \ket{x}\bra{x}$
\end{center}

The first corresponds to discarding data and the second is a uniform probabilistic mixture of states. The isometries, the maps such that $\tikzfig{isom0}=\tikzfig{isom1}$, satisfies: $\tikzfig{causal0}=\tikzfig{causal1}$.


\section{The calculus of oracles}

Many quantum algorithms are defined using oracles. An oracle can be viewed as a black box, it is not however an arbitrary map, a quantum oracle may have some structure: they are usually quantum encodings of classical functions,  moreover some promises can provide additional informations about the behaviour of the oracle. In the spirit of the ZX-calculus, we decompose, in this section, classes of quantum oracles into smaller components with better understood algebraic properties.

\subsection{Function arrows}

Given a function $f:\textbf{2}^n \to \textbf{2}^m$ we define a \textbf{function arrow}: $\tikzfig{farrow}:[n]\to [m]$.

\begin{center}
	$\interp{\tikzfig{farrow}}=\rho \mapsto V\rho V^\dagger $ with $V\df\ket{x}\mapsto 2^{\frac{m-n}{4}}\ket{f(x)}$
\end{center}

Any function arrow satisfies:

\begin{center}
	 $\tikzfig{fapply0}=\tikzfig{fapply1}\qquad\tikzfig{fgerase0}=\tikzfig{fgerase1}\qquad\tikzfig{fgcopy0}=\tikzfig{fgcopy1}$.
\end{center}
Where, with a slight abuse of notation, the $\pi$ factor is omitted, e.g. $\tikzfig{k1-b}$ is simply depicted $\tikzfig{k1-a}$, when it is clear that $x$ is a binary vector. 
We can visualize some properties of function as graphical equations for the corresponding function arrow:

\begin{lemma}
	Given a function arrow $f$:
	
	\begin{center}
		\begin{tabular}{c}
			$f$ is \textbf{balanced} $\quad\stackrel{def}{\Leftrightarrow}\quad\forall x,y\in 2^m ~ |f^{-1}\left(\{x\}\right)|=|f^{-1}\left(\{y\}\right)|\quad\Leftrightarrow\quad\tikzfig{fgcoerase0}=\tikzfig{fgcoerase1}$ \\
			$f$ is \textbf{injective} $\quad\stackrel{def}{\Leftrightarrow}\quad\forall x,y\in 2^n ~ \left(f(x)=f(y) \Rightarrow x=y\right)\quad\Leftrightarrow\quad\tikzfig{fgcocopy0}=\tikzfig{fgcocopy1}$
		\end{tabular}
	\end{center}
	
\end{lemma}

We see that being balanced implies being surjective. But the converse is not true.

\subsection{Red arrows}

A function $f:{\bf 2}^n \to {\bf 2}^m$ can be seen as a map $f:\mathbb{F}_2^n \to \mathbb{F}_2^m$, if this map is $\mathbb{F}_2$-linear then it can be described by a matrix $A\in \mathcal{M}_{m\times n}\left(\mathbb{F}_2\right)$. The \textbf{red matrix arrows} indexed by $A$ is defined by $\tikzfig{rarrow}\df \tikzfig{farrow}$. By convention if the index matrix is not given it is assumed that it is the full of one matrix. Those arrows have been extensively studied in \cite{carette2019szx} and \cite{carette2020colored}. We recall the main properties of red matrix arrows. First the $\mathbb{F}_2$-linearity translates to: $\tikzfig{Arcoerase0}=\tikzfig{Arcoerase1}\quad$ and $\quad\tikzfig{Arcocopy0}=\tikzfig{Arcocopy1}$.

They interact with the dividers and gatherers as:

\begin{center}
	$\tikzfig{rowred0}=\tikzfig{rowred1}$ and $\tikzfig{colred0}=\tikzfig{colred1}$.
\end{center}

Most of the properties of red matrix arrows can be summed up into one meta rule:

\begin{center}
	$\tikzfig{span0}=\tikzfig{span1} \Leftrightarrow \quad Im\begin{pmatrix}
	C\\D
	\end{pmatrix}=Ker\begin{pmatrix}
	A&B
	\end{pmatrix}$
\end{center}

With $k\df dim\left(Ker\begin{pmatrix}
C\\D
\end{pmatrix}\right)$ and $h\df dim\left(coKer\begin{pmatrix}
A&B
\end{pmatrix}\right)$.

A last important rule is the interaction with the Hadamard gate: $\tikzfig{ihad0}=\tikzfig{ihad1}$.


Red matrix arrows were the first motivation to the definition of scalable notations in \cite{chancellor2016graphical} where they are applied to the design of error correcting codes. 


\subsection{Yellow arrows}

As noted in \cite{carette2020colored}, the possibility to index arrows by matrices is linked to a bi-algebra structure. If the red/green bi-algebra leads to red matrix arrows, the yellow/green bi-algebra gives us another family of matrix arrows over the boolean semi-ring $\mathbb{B}\df \left(\{0,1\}, \land, \lor\right)$. A function $f:{\bf 2}^n \to {\bf 2}^m$ can be seen as a map $f:\mathbb{B}^n \to \mathbb{B}^m$, if this map is a homomorphism of $\mathbb{B}$-semi module, that is $f(a\land b)=f(a)\land f(b)$ and $f(1)=1$, then it can be described by a matrix $A\in \mathcal{M}_{m\times n}\left(\mathbb{B}\right)$. The \textbf{yellow matrix arrow} indexed by $A$ is then defined by $\tikzfig{yarrow}\df \tikzfig{farrow}$. We take the same convention as red arrows, if no matrix is given then it is imply that the arrow is indexed by the full of one matrix. Being a homomorphism of $\mathbb{B}$-semi module translates to:	$\tikzfig{yArcoerase0}=\tikzfig{yArcoerase1}$ and $\tikzfig{yArcocopy0}=\tikzfig{yArcocopy1}$. Yellow matrix have less properties than red ones. However we still have: $\tikzfig{rowyel0}=\tikzfig{rowyel1}$ and $\tikzfig{colyel0}=\tikzfig{colyel1}$.

\subsection{Quantum oracle}

The function arrow of $f$ is unitary if and only if $f$ is a bijection. There is however a standard way to associate with any function $f:{\bf 2}^n \to {\bf 2}^m$ a unitary transformation defined as $U_f=\ket{x}\ket{y}\mapsto \ket{x}\ket{f(x)\oplus y}$, often call \textbf{quantum oracle}. As pointed out in \cite{picturing-qp}, the quantum oracle can be constructed as follows:  $\tikzfig{foracle}: [n]\otimes [m] \to [n]\otimes [m]$. Indeed, $\interp{\tikzfig{foracle}}= \ket{x}\ket{y}\mapsto \ket{x}\ket{f(x)\oplus y}$. We can double check that quantum oracles are  involutions:

\begin{center}
	$\tikzfig{finv0}=\tikzfig{finv1}=\tikzfig{finv2}=\tikzfig{finv3}=\tikzfig{finv5}=\tikzfig{finv4}$
\end{center}

From a quantum oracle we can easily compute the original function using ancillas.

\begin{center}
	$\tikzfig{fc0}=\tikzfig{fc1}=\tikzfig{fc2}=\tikzfig{fc3}$
\end{center}

Note that the Toffoli gate is in fact the quantum oracle representing the $AND$ gate. Often, we consider boolean functions, then, another kind of oracle is available. For any boolean function $f:\textbf{2}^n\to \textbf{2}$, the diagonal oracle of $f$ is $\tikzfig{fdoracle}: [n] \to [n]$: $\interp{\tikzfig{fdoracle}}= \ket{x}\mapsto (-1)^{f(x)}\ket{x}$. We can construct the diagonal oracle from the oracle using ancillas:

\begin{center}
	$\tikzfig{fd0}=\tikzfig{fd1}=\tikzfig{fd2}=\tikzfig{fd3}$
\end{center}

We have now enough graphical structures to tackle the most basic quantum algorithms.

\section{Quantum Algorithms}

In this section we provide a diagrammatic treatment of some quantum algorithms that frequently appear in quantum computing textbooks  like \cite{nielsen2002quantum}. They are oracle-based: given a function which satisfies some properties (the promise), we want to recover some information about the function using  a minimal number of queries to the corresponding quantum oracle.

\subsection{Bernstein-Vazirani}

The Bernstein-Vazirani algorithm has been introduced in \cite{bernstein1997quantum}. The goal is to recover a string of bits encoded into a function. 

\begin{center}
	\begin{tabular}{|ll|}
		\hline
		&\\
		\textbf{Input:}& A function $f:\{0,1\}^n \to \{0,1\}$ of the form $f(x)=s^t\cdot x$ with $s\in \{0,1\}^n $.\\
		&\\
		\textbf{Problem:}& Find $s$.\\
		&\\
		\textbf{Circuit:}& $\tikzfig{DJprime} \quad\to\quad \tikzfig{DJ01}$\\
		&\\
		\hline
	\end{tabular}
\end{center}

Reformulating graphically the promise on $f$ gives us: $\tikzfig{farrow0}=\tikzfig{sarrow}$ and then:

\begin{center}
	$\tikzfig{BV5}=\tikzfig{BV6}=\tikzfig{BV7}=\tikzfig{BV8}=\tikzfig{BV9}=\tikzfig{BV10}$
\end{center}

We see the circuit directly outputs the state $\ket{s}$.

\subsection{Deutsch-Jozsa}

The Deutsch-Jozsa algorithm \cite{deutsch1992rapid} is historically the first of all quantum algorithms. 
Given a function that is known to be either constant or balanced, the goal is to decide in which case we are using only one query to the oracle. The version we present here is a little bit more general than usual, since we do not require $f$ to output a single bit. The general principle is the same as Bernstein-Vazirani,
the difference is that we are here interested in the probability of outputting $\ket{0}^{\otimes n}$.

\begin{center}
	\begin{tabular}{|ll|}
		\hline
		&\\
		\textbf{Input:}& A function $f:\{0,1\}^n \to \{0,1\}^m$ which is either constant or balanced.\\
		&\\
		\textbf{Problem:}& Decide whether $f$  is constant or balanced.\\
		&\\
		\textbf{Circuit:}& $\tikzfig{DJ} \quad\to\quad \tikzfig{DJ0}$\\
		&\\
		\hline
	\end{tabular}
\end{center}

We compute the amplitude of the outcome $\ket{0}^{\otimes n}$:

\begin{center}
	$\tikzfig{DJ00}=\tikzfig{DJ1}=\tikzfig{DJ2}=\tikzfig{DJ3}$
\end{center}

We then have two cases:

\begin{itemize}
	\item[$\bullet$] If $f$ is balanced then $\tikzfig{fgcoerase0}=\tikzfig{fgcoerase1}$ and $\tikzfig{DJ3}=\tikzfig{DJ4}=0$.
	
	\item[$\bullet$] If $f$ is constant then there exists $x\in \{0,1\}^m$ such that $\tikzfig{farrow}=\tikzfig{const}$:
	
	\begin{center}
		$\tikzfig{DJ3}=\tikzfig{DJ5}=\tikzfig{DJ6}=1$.
	\end{center}
	
\end{itemize}

So if the outcome is $\ket{0}^{\otimes n}$ then $f$ is constant otherwise $f$ is balanced.

\subsection{Simon}

Simon's algorithm 
is more subtle than the algorithm we have seen so far. This algorithm is probabilistic, moreover the quantum computation is combined with a classical processing. 
We are given a strictly periodic function $f$ and the goal is to find the period $s$. The quantum part of the algorithm is nothing but a random generator that outputs a string $y$ such that $y\cdot s=0$, in a uniform way. Repeating this quantum part several times, this gives, with high probability, enough linearly independent equations to solve the linear system with a classical algorithm and find $s$.

\begin{center}
	\begin{tabular}{|ll|}
		\hline
		&\\
		\textbf{Input:}& \begin{tabular}{l}
		A function $f:\{0,1\}^n \to \{0,1\}^{n}$ with an $s\in \{0,1\}^n $, $s\neq 0^n$, such that:\\[0.1cm]
		\multicolumn{1}{l}{$f(x)=f(y) \Leftrightarrow  (x=y) \lor (x\oplus s=y) $.}
		\end{tabular}\\
		&\\
		\textbf{Problem:}& Find $s$.\\
		&\\
		\textbf{Circuit:}& $\tikzfig{sim0} \quad\to\quad \tikzfig{sim1}$\\
		&\\
		\hline
	\end{tabular}
\end{center}
	
The translation of the promise into a graphical property is less straightforward than with the algorithms we have seen so far. Let $h$ be an orthogonal projector on $s^\perp$, $h$ is clearly strictly $s$ periodic. So there is a bijective function $g:\{0,1\}^{n} \to \{0,1\}^{n}$ such that $\tikzfig{farrow}=\tikzfig{simarrow}$. The circuit reduces to: 
	
	\begin{center}
		$\tikzfig{sim01}=\tikzfig{sim2}=\tikzfig{sim3}=\tikzfig{sim5}=\tikzfig{sim6}=\tikzfig{sim7}$.
	\end{center}
	
Since by definition $h^t=h$. We can directly see the resulting state: it is a uniform mixture of the elements in $s^\perp$. In other words, we can use this circuit to sample uniformly at random vectors $y_i$ such that $y_i\cdot s=0$.

\subsection{Grover}

The last and most famous algorithm we present is the Grover's Algorithm \cite{Gro97b}. Given a boolean function $f:{\bf 2}^n\to {\bf 2}$ such that $1$ has a unique preimage $x$. The objective is to find $x$. 
Roughly speaking, the algorithm consists in applying $k$ times a combination of the quantum oracle and a diffusion operator 
on the superposition of all classical inputs. We then show that choosing $k$ wisely, the output is $\ket{x}$ with a high probability.

\begin{center}
	\begin{tabular}{|ll|}
		\hline
		&\\
		\textbf{Input:}& A boolean function $f:\{0,1\}^n \to \{0,1\}$ such that $f^{-1}(\{1\})=\{x\}$ with $x\in \{0,1\}^n $.\\
		&\\
		\textbf{Problem:}& Find $x$.\\
		&\\
		\textbf{Circuit:}& $\tikzfig{grov0} ~\to~ \tikzfig{grov1}$\\
		\hline
	\end{tabular}
\end{center}

Here we need to explain the translation into diagrams. First the ancillas is only here to form the diagonal oracles we then have:

\begin{center}
	$\tikzfig{grov0} ~\to~ \tikzfig{grov2}$
\end{center}

translating into diagrams:

\begin{center}
	$\tikzfig{grov2} ~\to~ \tikzfig{grov3}$
\end{center}

Now, using the iteration mechanism of Lemma $3$ to represent the $k$ queries gives:
 
\begin{center}
	$\tikzfig{grov3} ~\to~ \tikzfig{grov1}$
\end{center}
 
The $\times$ stands for the matrix resulting of the thickening of the AND gate. The promise translates to: $\tikzfig{farrow0}=\tikzfig{grarrow}$ where $\bar x$ is the bit-wise negation of $x$. Our goal is to compute the probability of the outcome $\ket{x}$. Making the $x$ red phase slides gives:

\begin{center}
$\tikzfig{grov4}=\tikzfig{grov5}=\tikzfig{grov6}$
\end{center}

Using the iteration mechanism we get: $\tikzfig{grov7}$.

Here we will translate a geometric approach into diagrams.

\begin{lemma}
	Setting $\nu\df\frac{1}{\sqrt{2^n -1}}$, $\cos(\frac{\mu}{2})\df \frac{-1}{\sqrt{2^n}}$ and $\sin(\frac{\mu}{2})\df \frac{\sqrt{2^n -1}}{\sqrt{2^n}}$, the map $\tikzfig{V}\df\tikzfig{proj0}$ satisfies:
	
	\begin{center}
		\begin{tabular}{ll}
			$\bullet \quad\tikzfig{keto3}=\tikzfig{keto0}$ & $\bullet \quad\tikzfig{iso0}=\tikzfig{proj6}$\\[0.5cm]
			
			$\bullet \quad\tikzfig{keti0}=\tikzfig{keti3}$ & $\bullet \quad\tikzfig{stab0}=\tikzfig{staab}$
		\end{tabular}
	\end{center}

\end{lemma}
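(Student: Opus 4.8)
The plan is to read $V=\interp{\text{proj0}}$ as an isometry $\interp{[1]}\to\interp{[n]}$ that embeds the two-dimensional ``Grover plane'' spanned by the marked state $\ket{x}$ and the uniform superposition over its complement, and then to verify the four identities one at a time. Since the definition of $V$ packages a single phase $\mu$ together with the scalar $\nu$, I expect the structural content to follow from spider fusion and the copy/erase rules of the arachnids, while the three numerical data $\nu=\tfrac{1}{\sqrt{2^{n}-1}}$, $\cos(\tfrac{\mu}{2})=\tfrac{-1}{\sqrt{2^{n}}}$ and $\sin(\tfrac{\mu}{2})=\tfrac{\sqrt{2^{n}-1}}{\sqrt{2^{n}}}$ will be exactly what is needed to make the global scalars cancel. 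Throughout I would keep the identities $\cos^{2}(\tfrac{\mu}{2})+\sin^{2}(\tfrac{\mu}{2})=1$ and $\nu^{2}(2^{n}-1)=1$ at hand.

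For the two ket equations I would feed the computational states into the definition of $V$. Plugging in $\ket{0}$ collapses one spider of $\text{proj0}$ by the copy rule, and the surviving spider together with the $\nu$ factor produces the claimed normalized superposition over the complement of $x$; plugging in $\ket{1}$ introduces a $\pi$ phase that, combined with the $\mu$ phase, selects out the single marked branch and leaves $\ket{x}$. Both reductions should use only fusion and the promise equation relating the function arrow to $x$ recorded at the start of the Grover derivation. For the isometry equation I would compose $V$ with its vertical mirror image $V^{\dagger}$ and fuse: the two diagonal branches become orthonormal precisely because $\nu^{2}(2^{n}-1)=1$ and $\cos^{2}(\tfrac{\mu}{2})+\sin^{2}(\tfrac{\mu}{2})=1$, so the composite rewrites to the bare identity wire on $[1]$.

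The real work is the last equation, which asserts that the Grover iterate $G$ (the diagonal oracle followed by the diffusion operator assembled earlier in this section) stabilises the image of $V$ and restricts there to the qubit rotation carried by $\mu$, i.e.\ $G\circ V=V\circ R_{\mu}$ for the one-qubit rotation $R_\mu$. Here I would first slide the oracle's $x$-phase through $V$ exactly as in the phase-slide step already used above, turning the diagonal oracle into a reflection of the plane, and then rewrite the diffusion operator using the red/green bialgebra together with the Hadamard interaction rule for red matrix arrows into a second reflection.

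The main obstacle I expect is checking that the composition of these two reflections is the single $\mu$-rotation with the correct orientation: because $\cos(\tfrac{\mu}{2})<0$ the angle $\tfrac{\mu}{2}$ is obtuse, so the sign conventions of the well-tempered normalisation must be tracked carefully to ensure the diffusion's overall phase lands on the $R_\mu$ spider rather than on a stray scalar. Once the angle addition is confirmed, substituting the chosen values of $\mu$ and $\nu$ finishes the verification; this reduction of the $n$-qubit dynamics to a one-qubit rotation is what will make the subsequent computation of the success probability of the outcome $\ket{x}$ a routine trigonometric calculation.
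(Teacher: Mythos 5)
Your reading of the lemma is correct, and your treatment of the first three identities (evaluate on basis states, fuse, and let $\nu^2(2^n-1)=1$ and $\cos^2(\tfrac{\mu}{2})+\sin^2(\tfrac{\mu}{2})=1$ kill the scalars) matches the paper's, which likewise handles them by direct rewriting, with one auxiliary identity for the isometry equation checked semantically and then promoted to a graphical equality. On the fourth identity, however, you take a genuinely different route. You propose the textbook geometric argument: the oracle and the diffusion operator are two reflections of the Grover plane, so their composite restricted to the image of $V$ is the rotation $R_\mu$, i.e.\ $G\circ V=V\circ R_\mu$. The paper instead first reduces the equation to a simpler one and then proves that by induction on $n$: the base case $n=1$ is a short concrete rewrite (there $\nu=1$ and the angle is $-\tfrac{\pi}{2}$), and the inductive step hinges on a single explicit matrix identity commuting $\tfrac{1}{\sqrt2}\bigl(\begin{smallmatrix}1&1\\1&-1\end{smallmatrix}\bigr)$ tensored with the $(n{-}1)$-level block past a fixed branching map; this identity is verified numerically, turned into a graphical equation, and used to peel off one wire at a time. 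Your route is more conceptual and explains why the angle is $\mu$, but the step you flag as "the main obstacle" --- confirming that the two reflections compose to a $\mu$-rotation with the right orientation and no stray scalar --- is exactly where the whole content of the lemma sits, and your proposal defers rather than performs it; expect it to cost a $2\times 2$ computation of the same size as the paper's observation. The paper's induction buys a proof that stays inside the scalable rewriting machinery (closer to the stated goal of mechanisation) at the price of an unilluminating matrix check. One small misreading: the promise $f^{-1}(\{1\})=\{x\}$ and the $x$-phase slide are applied \emph{before} the lemma to bring the circuit into the form where $V$ appears; the lemma itself concerns the bare diagram $V$ and does not involve the function arrow, so the copy/promise steps you invoke for the ket equations are not needed there.
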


This lemma allows us to rewrite the diagram as follows: $\tikzfig{grov9}$.

Making the isometries slide gives:

\begin{center}
	$\tikzfig{grov10}=\tikzfig{grov11}=\tikzfig{grov12}$
\end{center}

The last step being the iteration mechanism. We can now compute the interpretation: 

\begin{center}
	$\interp{\tikzfig{grov12}}= \left|\begin{pmatrix} 1&0 \end{pmatrix}\begin{pmatrix}
	\cos(k\mu )&\sin(k\mu )\\
	-\sin(k\mu )&\cos(k\mu )
	\end{pmatrix}\begin{pmatrix}-\cos(\frac{\mu }{2})\\\sin(\frac{\mu }{2})\end{pmatrix}\right|^2=\cos^2(\frac{2k+1}2\mu)$
\end{center}

So the probability is maximal when $\frac{2k+1}2\mu\simeq 0\bmod \pi$. Moreover $ \mu= \pi \pm \frac{2}{\sqrt{2^n}}+o(\frac{1}{\sqrt{2^n}})$, thus $k\simeq \frac{\pi}{4}\sqrt{2^n}$.

\section*{Conclusion}

In this article, we have used scalable notations to verify some standard quantum algorithms. For the moment,  this work is merely exploratory. Trying to tackle graphically many algorithms and protocols is the only way to evaluate the current ergonomics of graphical methods. The ultimate goal is to be able to compile high level quantum programming languages directly into diagrams. Then a graphical proof assistant could be used to provide proofs of correctness and optimizations. Case studies like in this paper are steps toward a double understanding. First, how graphical languages must be designed to fit this purpose. Second, what should be the specifications of future graphical proof assistants. Those two perspectives are clearly entangled. 

If the graphical verification of most of the algorithms we presented are neat and straightforward. Our approach of Grover's algorithm is still not rigorous enough to be implemented in a future proof assistant. More works need to be done on the higher level structure like the iteration mechanism we have  sketched out.


\bibliographystyle{eptcs}

\bibliography{scal.bib}

\appendix

\section{Proofs}

\begin{lemma}[Iteration]
	Given any diagram $f:a\to a$ in $S\mathcal{D}$:
	
	\begin{center}
		$\tikzfig{iter0}=\left(\tikzfig{iter1}\right)^{k+1}$
	\end{center}
	
\end{lemma}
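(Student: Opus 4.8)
The plan is to argue by induction on $k$, leaning on the recursive definition of the thickening functor $T_k$ together with the yanking (snake) equations of the compact structure. The conceptual picture is that the scalable machinery packages the $k+1$ sequential copies of $f$ onto a single thick wire via $T_{k+1}$, while the compact structure converts the block-parallel action of a thickened map back into an honest sequential composition: each cap in $\tikzfig{iter0}$ contracts the output block of one copy with the input block of the next. Throughout, the identities $T_{l}\circ S_{k}=S_{kl}$, $T_{1}=id_{S\mathcal{D}}$ and the distributivity rule $\tikzfig{dist0}=\tikzfig{dist1}$ let me commute thickening freely past scaled generators, and the Rewiring Lemma lets me treat any diagram of dividers and gatherers of matching type as the canonical isomorphism.

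For the base case $k=0$ the right-hand side is $f^{1}=f$, and the left-hand side has thickness $1$. Since $T_{1}=id_{S\mathcal{D}}$ and the single connecting cup/cap can be removed by the yanking equation $\tikzfig{s0}=\tikzfig{s2}=\tikzfig{s1}$, the thickness-$1$ iteration diagram collapses to $f$, with all residual dividers and gatherers trivialised by Rewiring.

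For the inductive step I would unfold the outermost layer of the thickening using its recursive definition $\tikzfig{thick0}=\tikzfig{thick1}$, which splits the thickening of $[f]$ into one peeled-off copy acting on a single block together with a thinner thickening acting on the remaining blocks. The cap that the iteration diagram places between the last two blocks then contracts the output of this extra copy with the input of the thinner part; by naturality of the compact structure, i.e. by sliding $f$ along the cup/cap using the bending rules, this contraction is precisely a sequential composition $f\circ(-)$. What is left is exactly the thinner iteration diagram, to which the induction hypothesis applies to yield $\left(\tikzfig{iter1}\right)^{k}$; composing with the peeled copy gives $\left(\tikzfig{iter1}\right)^{k+1}$, as required.

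The hard part will be the bookkeeping of the permutations introduced when thickening the dividers and gatherers, exactly the subtlety flagged in the text (``thickening a divider or a gatherer involves a permutation of the wires''). One must verify that, after peeling off a block, the induced shift permutation aligns the remaining output blocks with the remaining input blocks so that the caps join only consecutive copies and nothing else. Here the Rewiring Lemma does the heavy lifting: since any two divider/gatherer diagrams of the same type coincide, the permutation rearrangements are coherent and cost nothing, so no genuine combinatorial computation survives. The only remaining content is to fix the bending convention consistently so that the single cap performs a sequential composition rather than a transpose or a full trace, which is then settled by the yanking equations.
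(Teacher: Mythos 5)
Your proposal is correct and follows essentially the same route as the paper: induction on $k$, with the base case collapsing via $T_1=id$ and yanking, and the inductive step peeling off one layer of the thickening via its recursive definition, sliding the peeled copy of $f$ along the cup/cap to turn the block contraction into sequential composition, and invoking the induction hypothesis, with the Rewiring Lemma absorbing the divider/gatherer permutations. No substantive difference from the paper's argument.
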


\begin{proof}
	By induction, for $k=0$: 
	
	\begin{center}
		$\tikzfig{iter2}=\tikzfig{iter3}=\tikzfig{iter4}=\tikzfig{iter1}$
	\end{center}
	
	For $k>0$, let $k=l+1$:
	
	\begin{center}
		$\tikzfig{iter5}=\tikzfig{iter6}=\tikzfig{iter7}=\tikzfig{iter8}=\tikzfig{iter9}=\tikzfig{iter10}$
	\end{center}
	
\end{proof}

\begin{lemma}
	Given a function arrow $f$:
	
	\begin{center}
		\begin{tabular}{c}
			$f$ is \textbf{balanced} $\quad\stackrel{def}{\Leftrightarrow}\quad\forall x,y\in 2^m ~ |f^{-1}\left(\{x\}\right)|=|f^{-1}\left(\{y\}\right)|\quad\Leftrightarrow\quad\tikzfig{fgcoerase0}=\tikzfig{fgcoerase1}$ \\
			$f$ is \textbf{injective} $\quad\stackrel{def}{\Leftrightarrow}\quad\forall x,y\in 2^n ~ \left(f(x)=f(y) \Rightarrow x=y\right)\quad\Leftrightarrow\quad\tikzfig{fgcocopy0}=\tikzfig{fgcocopy1}$
		\end{tabular}
	\end{center}
	
\end{lemma}

\begin{proof}
	
	\begin{align*}
	\tikzfig{fgcoerase0}=\tikzfig{fgcoerase1}&\quad\Leftrightarrow\quad\interp{\tikzfig{fgcoerase0}}=\interp{\tikzfig{fgcoerase1}}\\
	&\quad\Leftrightarrow\quad  \sum\limits_{x\in \textbf{2}^n} \ket{f(x)}=\sum\limits_{y\in \textbf{2}^m} 2^{\frac{n-m}{2}}\ket{y}\\
	&\quad\Leftrightarrow\quad \forall x,y\in 2^m ~ |f^{-1}\left(\{x\}\right)|=|f^{-1}\left(\{y\}\right)|
	\end{align*}
	
	\begin{align*}
	\tikzfig{fgcocopy0}=\tikzfig{fgcocopy1}&\quad\Leftrightarrow\quad\interp{\tikzfig{fgcocopy0}}=\interp{\tikzfig{fgcocopy1}}\\
	&\quad\Leftrightarrow\quad  \forall x,y\in 2^n ~\delta_{x=y} \ket{f(x)}=\delta_{f(x)=f(y)} \ket{f(x)}\\
	&\quad\Leftrightarrow\quad \forall x,y\in 2^n ~ \left(f(x)=f(y) \Rightarrow x=y\right)
	\end{align*}
	
\end{proof}



\begin{lemma}
	Setting $\nu\df\frac{1}{\sqrt{2^n -1}}$, $\cos(\frac{\mu \pi}{2})\df \frac{-1}{\sqrt{2^n}}$ and $\sin(\frac{\mu \pi}{2})\df \frac{\sqrt{2^n -1}}{\sqrt{2^n}}$, the map $\tikzfig{V}\df\tikzfig{proj0}$ satisfies:
	
	\begin{center}
		\begin{tabular}{ll}
			$\bullet \quad\tikzfig{keto3}=\tikzfig{keto0}$ & $\bullet \quad\tikzfig{iso0}=\tikzfig{proj6}$\\[0.5cm]
			
			$\bullet \quad\tikzfig{keti0}=\tikzfig{keti3}$ & $\bullet \quad\tikzfig{stab0}=\tikzfig{staab}$
		\end{tabular}
	\end{center}
	
\end{lemma}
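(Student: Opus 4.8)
The plan is to push all four diagrammatic equations through the interpretation functor $\interp{\cdot}$ and to read $V\colon[1]\to[n]$ as the isometry embedding the two-dimensional ``Grover plane'' into the $n$-qubit space. Unfolding its definition, $V$ sends the two computational basis states of the qubit to an orthonormal pair: the target state $\ket{x}$ and the normalised uniform superposition $\nu\sum_{y\neq x}\ket{y}$ over the non-solutions, which is exactly where $\nu\df 1/\sqrt{2^n-1}$ enters. Every generator involved is pure, so each equation is equivalent to an identity of the underlying linear maps, and I can argue either by ZX/ZH rewriting or, more cheaply, by checking the matching amplitudes.

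The first three bullets are routine. For the two ket-equations I would evaluate $V$ on $\ket{0}$ and on $\ket{1}$ using the copy and spider rules, recover the two image states above, and match amplitudes with the right-hand diagrams; the substitutions $\cos(\tfrac{\mu}{2})=-1/\sqrt{2^n}$ and $\sin(\tfrac{\mu}{2})=\sqrt{2^n-1}/\sqrt{2^n}$ are precisely the coefficients for which the uniform input decomposes as $-\cos(\tfrac{\mu}{2})\ket{x}+\sin(\tfrac{\mu}{2})\,\nu\sum_{y\neq x}\ket{y}$. The isometry bullet $V^\dagger V=\mathrm{id}_{[1]}$ then amounts to orthonormality of this image pair, which diagrammatically is a spider fusion combined with the erasing rules of the arachnids.

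The fourth, ``stabiliser'', bullet carries the real content and is the step I expect to be the main obstacle. Semantically it says that one Grover iterate $G$ — the diagonal oracle followed by the diffusion operator — preserves the image of $V$ and acts there as the planar rotation $R_\mu=\left(\begin{smallmatrix}\cos\mu&\sin\mu\\-\sin\mu&\cos\mu\end{smallmatrix}\right)$, i.e.\ $G\,V=V\,R_\mu$. I would prove it by writing the oracle as the reflection $I-2\ket{x}\bra{x}$ and the diffusion as the reflection $2\ket{u}\bra{u}-I$ about the uniform state $\ket{u}=V\!\left(\begin{smallmatrix}-\cos(\mu/2)\\\sin(\mu/2)\end{smallmatrix}\right)$, sliding each reflection through $V$ via the ket-equations and the red/green phase-slide rules, and collecting the result into a single orthogonal $2\times2$ block which I then identify with $R_\mu$ by a composition-of-two-reflections computation whose angle is fixed by the half-angle data.

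The difficulty is twofold. First one must certify that the plane is invariant, which diagrammatically relies on the isometry bullet to cancel an internal $V V^\dagger$; second one must match signs and the exact angle, since a product of two reflections equals $R_\mu$ rather than $R_{-\mu}$ only for the correct orientation of $V\ket{0}$ against $V\ket{1}$. With the four bullets established, the ``isometry slide'' followed by the iteration mechanism of Lemma~3 packages $k$ iterates into $R_{k\mu}$ and lets the two boundary copies of $V$ collapse against the input state and output effect, which is exactly the datum the final amplitude computation consumes.
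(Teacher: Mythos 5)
Your reading of the lemma is right and your argument would go through, but on the one bullet that carries the real content you take a genuinely different route from the paper. For the fourth equation the paper stays almost entirely inside the graphical calculus: it first rewrites the left-hand side to peel off the trivial part, and then proves the residual identity \emph{by induction on the number of qubits $n$}. The base case $n=1$ (where $\nu=1$ and the angle is $-\pi/2$) is checked by direct rewriting, and the inductive step rests on one semantically verified observation: a commutation relation between the map that splits off a single qubit (a Hadamard tensored with the $(n-1)$-qubit block, against a fixed $4\times 2$ embedding) and the $2\times 2$ matrices $\bigl(\begin{smallmatrix}1/\sqrt{2^{k}} & (2^{k}-1)/\sqrt{2^{k}}\\ 1/\sqrt{2^{k}} & -1/\sqrt{2^{k}}\end{smallmatrix}\bigr)$ for $k=n-1$ and $k=n$. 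Your proposal instead works wholly under $\interp{\cdot}$ via the classical two-reflections picture: oracle $=I-2\ket{x}\bra{x}$, diffusion $=2\ket{u}\bra{u}-I$, invariance of $\mathrm{Im}(V)$, and identification of the restricted composite with $R_\mu$. That argument is correct, and the sign and orientation bookkeeping you flag is genuinely where the care is needed, since $\cos(\mu/2)=-1/\sqrt{2^n}$ places $\mu$ near $\pi$ and the extra global $-1$ in the diffusion is itself a planar rotation by $\pi$; it is arguably the more illuminating proof of \emph{why} the angle is $\mu$. What it gives up is exactly what the paper is trying to demonstrate: the paper's induction keeps the proof diagrammatic (at the cost of having to find the right one-qubit-splitting identity), whereas yours collapses the bullet to a linear-algebra fact. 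On the first three bullets the two proofs essentially coincide: the paper does the two ket-equations by pure rewriting and the isometry equation by first checking a small auxiliary identity semantically and then rewriting, which matches your plan of evaluating $V$ on the basis states and matching amplitudes.
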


\begin{proof} We proceed point by point:
	
	\begin{itemize}
		
		\item By rewriting:
		
		\begin{center}
			$\tikzfig{keto0}=\tikzfig{keto1}=\tikzfig{keto2}=\tikzfig{keto3}$
		\end{center}
		
		\item By rewriting:
		
		\begin{center}
			$\tikzfig{keti0}=\tikzfig{keti1}=\tikzfig{keti2}=\tikzfig{keti3}$
		\end{center}
		
		\item First we can check: $\interp{\tikzfig{defx1}}=\interp{\tikzfig{defx2}}$ and then $\tikzfig{defx1}=\tikzfig{defx2}$. So:
		\begin{center}
			$\tikzfig{iso0}=\tikzfig{proj1}=\tikzfig{proj2}=\tikzfig{proj3}=\tikzfig{proj4}=\tikzfig{proj6}$
		\end{center}
	
		\item First we have : $\tikzfig{stab0}=\tikzfig{stab1}=\tikzfig{stab2}$. So the question reduces to show that: $\tikzfig{stab3}=\tikzfig{stab4}$.
		
		We proceed by induction on $n$. If $n=1$ then $\nu=1$ and $\mu=\frac{-\pi}{2}$:
		
		\begin{center}
			$\tikzfig{stab5}=\tikzfig{stab6}=\tikzfig{stab7}=\tikzfig{stab8}$
		\end{center}
		
		For $n>1$ we need the following observation:
		
		\begin{center}
			 $\frac{1}{\sqrt{2}}\begin{pmatrix}
			1&1\\1&-1
			\end{pmatrix} \otimes \begin{pmatrix}
			\frac{1}{\sqrt{2^{n-1}}}&\frac{2^{n-1} -1}{\sqrt{2^{n-1}}}\\ \frac{1}{\sqrt{2^{n-1}}}& \frac{-1}{\sqrt{2^{n-1}}}
			\end{pmatrix}\begin{pmatrix}
			1&0\\
			0&1\\
			0&1\\
			0&1
			\end{pmatrix}=\begin{pmatrix}
			1&0\\
			0&1\\
			0&1\\
			0&1
			\end{pmatrix}\begin{pmatrix}
			\frac{1}{\sqrt{2^{n}}}&\frac{2^{n} -1}{\sqrt{2^{n}}}\\ \frac{1}{\sqrt{2^{n}}}& \frac{-1}{\sqrt{2^{n}}}
			\end{pmatrix}$.
		\end{center}
		
		 Graphically:
		 
		 \begin{center}
		 	$\tikzfig{stab9}=\tikzfig{stab10}$
		 \end{center}
		 
		 with $\nu'\df \frac{1}{\sqrt{2^{n-1}-1}}$, $\cos(\frac{\mu' \pi}{2})\df \frac{-1}{\sqrt{2^{n-1}}}$ and $\sin(\frac{\mu' \pi}{2})\df \frac{\sqrt{2^{n-1} -1}}{\sqrt{2^{n-1}}}$.
		 
		 We have:
		 
		 \begin{center}
		 	$\tikzfig{stab11}=\tikzfig{stab12}$
		 \end{center}
	 
		 The induction hypothesis gives:
		 
		 \begin{center}
		 	$\tikzfig{stab13}$
		 \end{center}
	 
		 And finally, using the previous observation: 
		 
		 \begin{center}
		 	$\tikzfig{stab14}=\tikzfig{stab15}$
		 \end{center}
		
	\end{itemize}

\end{proof}

\end{document}